\newtheorem{theorem}{Theorem}[section] 
\newtheorem{corollary}{Corollary}[theorem] 
\newtheorem{lemma}[theorem]{Lemma} 
\newtheorem{proposition}[theorem]{Proposition} 
\newtheorem{remark}[theorem]{Remark}
\newcommand{\ds}{{\slashed\partial}}
\newcommand{\A}{{\mathcal A}}
\newcommand{\B}{{\mathcal B}}
\newcommand{\T}{{\mathcal T}}
\newcommand{\HH}{{\mathcal H}}
\newcommand{\M}{{\mathcal M}}
\newcommand{\C}{{\mathbb C}}
\newcommand{\cinf}{{C^\infty(\M)}}
\newcommand{\I}{\mathbb I}
\begin{document} 
\title{{\bf Minimal twist of almost commutative geometries}}
\author{\vspace{-.3truecm} Manuele Filaci\textsuperscript{\dag},  Pierre Martinetti\textsuperscript{*\dag}}
\maketitle

\vspace{-.5truecm}\begin{center}
  \textsuperscript{*}\emph{Universit\`a di Genova, Dpt di
    Matematica},
  \textsuperscript{\dag}\emph{INFN sezione di Genova,} \\[0pt]
  \emph{via Dodecaneso, 16146 Genova GE Italy.} \\[6pt]
  \emph{E-mail:} manuele.filaci@ge.infn.it, martinetti@dima.unige.it,
\end{center}
\smallskip

\begin{center}
  \emph{In memoriam John Madore}
\end{center}

  \begin{abstract}
    We classify the  twists of almost
    commutative spectral triples that keep the Hilbert space and the Dirac
    operator untouched. The involved twisting operator is shown to be
    the product of the grading of a manifold by a finite dimensional
    operator, which is not necessarily a grading of the internal space. Necessary and sufficient constraints on this operator
    are listed.
  \end{abstract}

\section{Introduction}

The Higgs field might be a probe of the internal structure of spacetime. This idea - pioneered in
\cite{Dubois-Violette:1989fk} - has been fully implemented in the
framework of  noncommutative geometry \cite{Connes:1994kx}. It  yields a
description of the Standard Model of fundamental interactions (including massive neutrinos~\cite{Chamseddine:2007oz})  as a
pure gravity theory on an \emph{almost commutative geometry}
\cite{Connes:1996fu}, that is on 
a space which is the product of a usual continuous manifold $\M$ by some
internal matricial structure. 

This
product is mathematically well defined in terms of spectral
triple. The latter consists in an involutive algebra $\A$ acting
faithfully on some Hilbert space $\HH$, together with a selfadjoint
operator $D$ with compact resolvent, such that the commutator
$[D, a]$
is bounded for any $a$ in~$\A$. With additional axioms, spectral
triples furnish a purely algebraic characterization of
riemannian (spin) manifolds \cite{connesreconstruct}, as well as their generalisation to the
noncommutative setting~\cite{Connes:1996fu}. 

A twisted spectral triple is defined similarly, except that the
commutator is no longer required to be bounded. Instead, one
asks for an automorphism $\rho$ of $\A$ such that  the twisted commutator 
\begin{equation}
\label{eq:7b}
  [D, a]_\rho := D a - \rho(a) D
\end{equation}
is bounded for any $a$ in $\A$. Such twists have been introduced in \cite{Connes:1938fk} with some mathematical
motivations. Later, they show to be useful for  physical
applications as well, for they offer a way to build models beyond the
Standard Model \cite{Devastato:2013fk,buckley}. In particular, by twisting the spectral triple of the
Standard Model in a minimal way, that is keeping the Hilbert
space and Dirac operator untouched (only the algebra is modified), one produces extra bosonic fields without
altering the fermionic content of the theory.

There exists a  general procedure to obtain such minimal twist,  recalled in
section \ref{sec:minimal-twist-1}, which  uses a
grading of the spectral triple \cite{Lett.}. Recall that the later is a selfadjoint operator
$\Gamma$ on $\HH$, squaring to the identity $\I$  and anticommuting with
$D$, such that $[\Gamma, a]=0$ for any $a\in\A$. 
 The aim of this note is to understand  which of these properties
are necessary:
  is the twist doable using a \emph{twisting operator}
  that is not a grading ?

  We first list in section
\ref{sec:twisting-operator}  some basic properties
expected from the twisting operator. Then we proceed at the light of three conditions that must be satisfied by a real
twisted spectral triple: the boundedness of the twisted commutator
\eqref{eq:7b} in section \ref{sec:bound-comm-1}, the order zero condition
in section \ref{sec:order-zero-condition} and the twisted first-order
condition in section \ref{sec:twisted-first-order}. 
The resulting constraints  are listed in propositions 
 \ref{sec:order-zero-condition-1} and
 \ref{sec:twisted-first-order-2}. The grading
operator  is not the only solution.

For 
almost-commutative geometries, assuming the
twisting operator is the product of an
operator $\mathcal T$ acting on the spinor space of $\M$ with
an operator $T_F$ acting on the internal space, these constraints are
shown to be equivalent to their reduction to the internal space
(corollaries  \ref{sec:order-zero-condition-2} and  \ref{sec:twisted-first-order-1})
. Although the boundedness of the commutator forces 
$\mathcal T$ to be the grading  of the manifold, 
$T_F$  is not necessarily
a grading of the internal space.

\newpage
\section{Minimal twist}
\label{sec:minimal-twist-1}

A \emph{minimal twist} of a (real, graded) spectral triple  $(\A, \HH,
D)$ is \cite[Def. 3.2]{Lett.} 
a (real, graded) twisted spectral triple $(\A\otimes{\mathcal B}, \HH,
D)_\rho$ where $\mathcal B$ is a involutive algebra with unit $1_\B$,
$\rho$ an automorphism of $\A\otimes{\mathcal B}$ and the
representation $\pi$ of the latter on $\HH$  is such that 
\begin{equation}
  \label{eq:4bis}
  \pi(a\otimes 1_\B) =\pi_0(a) \quad \forall a\in \A,
\end{equation}
where $\pi_0$ is the representation of $\A$ on $\HH$ from the initial spectral
triple.

If this initial spectral triple is graded, then
there always exists a
minimal twist with{\footnote{To fix
notation  we assume that $\A$ is a complex algebra, but the results
 also hold for real algebras.}}  $\B= \C^2$
and $\rho$ the automorphism of $\A\otimes \C^2\simeq \A\oplus\A$ given by the flip
\begin{equation}
  \label{eq:4}
\rho((a, a')):=  (a', a) \quad \forall a, a'\in \A\otimes\C^2.
\end{equation}
The construction of this minimal \emph{twist-by-grading},
\begin{equation}
 \label{eq:11}
(\A\otimes\C^2, \HH,
D)_\rho,
\end{equation}
starts with  the following
observation:   by definition the grading
$\Gamma$ commutes with the algebra, so the projections $\frac{\I\pm \Gamma}2$ on its
eigenspaces $\HH_\pm$  define two independent involutive
representations
\begin{equation}
\label{eq:5}
   \pi_\pm(a):= \left(\frac{\I\pm \Gamma}2\pi_0(a)\right)_{\HH_\pm}
\end{equation}
 of $\A$ on~$\HH_\pm$. Their direct sum
\begin{equation}
\label{eq:3}
  \pi(a,a'):=\frac{\I+ \Gamma}2\pi_0(a) +
  \frac{\I- \Gamma}2\pi_0(a')\qquad \forall a, a'\in\A
\end{equation} 
is a representation of $\A\otimes\C^2$ that satisfies the properties
of a twisted spectral triple \cite[Prop.3.8]{Lett.} as well as
condition \eqref{eq:4bis}.

 This twist-by-grading is  the only
possible minimal twist for the spectral triple naturally associated to
an (even dimensional) closed riemannian spin manifold $\M$ \cite[Prop.4.2]{Lett.}, namely
\begin{equation}
\label{eq:13}
  \cinf,\quad L^2(\M, S),\quad \ds = -i\sum_{\mu=1}^{\text{dim} \M} \gamma^\mu\nabla_\mu
\end{equation}
where the unital algebra $\cinf$ of smooth functions on $\M$ acts by
multiplication on the
Hilbert space $L^2(\M, S)$  of square integrable spinors,
\begin{equation}
\label{eq:1}
  (\pi_\M(f)\psi)(x):=f(x)\psi(x) \quad \forall\psi\in L^2(\M,S), x\in \M,
\end{equation}
and $\ds$ is the Dirac operator
associated with the spin structure. This spectral triple is graded
with grading the product $\gamma_{\M}$ of the gamma matrices.

The unicity of this twist  no longer holds true for an almost commutative geometry 
\begin{equation}
\label{eq:16}
 \A = \cinf\otimes \A_F,\quad \HH= L^2(\M, S)\otimes \HH_F,\quad D=
  \ds\otimes\I_F + \gamma_\M\otimes D_F,
\end{equation} 
that is the product of \eqref{eq:13} with a finite dimensional graded
spectral
triple $(\A_F, \HH_F, D_F)$ with grading $\Gamma_F$ (in the equation
above $\I_F$ is the identity operator on $\HH_F$). The
representation 
\begin{equation}
\label{eq:2}
  \pi_0 = \pi_\M\otimes \pi_F
\end{equation}
of $\A$ on $\HH$ is the product of $\pi_\M$ with the
representation $\pi_F$ of $\A_F$ on $\HH_F$ given by the finite dimensional
spectral triple. If $\pi_F$ is irreducible, then any minimal twist is
necessarily by $\B=\C^2$ but the representation $\pi$  is not necessarily
the one given in
\eqref{eq:3}, as explained  below. If  $\pi_F$  is not irreducible, there exists minimal twists with $\B$ different from
$\C^2$ \cite[Corr.4.5]{Lett.}. 

\begin{remark}
  The dimension of $\M$ has to be even so that the spectral triple
  \eqref{eq:13} admits a  grading $\gamma_\M$. The odd dimensional case should be
  investigated elsewhere.
\end{remark}

\section{Twisting operator}
\label{sec:twisting-operator}

The point of this note is to investigate which  properties of
the grading $\Gamma$ are necessary to build a minimally twisted partner
to a usual spectral triple. The commutativity with the
initial representation of $\A$ is important to get  two
independent representations $\pi_\pm$, but to what extend are the commutation properties of $\Gamma$
with $D$ and  (in case of a real spectraal triple) with the real structure $J$ relevant  ? 

Given a spectral triple $(\A, \HH,D)$, we thus consider 
an operator in $\mathcal B(\mathcal
H)$,
which shares all the properties of a grading but the commutation
properties with $D$ and~$J$. Namely $T$ is selfadjoint, $T^2=\I$, the degeneracy of both its eigenvalues
$\pm 1$ are non-zero and  $T$ commutes with the
representation $\pi_0$ of $\A$ on $\HH$.  
 The latter is thus  
the direct sum $\pi_+ \oplus\pi_-$ of the two involutive representations of $\A$ on the eigenspaces $\HH_\pm$ of $T$ given by
 \begin{equation}
   \pi_\pm(a)=\left(\frac{\I\pm  T}2\pi_0(a)\right)_{\HH_\pm} .
 \end{equation}
As in \eqref{eq:3}, the operator  $T$ allows to define a
representation of $\A\otimes\C^2$  on $\HH$
\begin{equation}
\label{eq:3bis}
  \pi(a,a'):=    \pi_+(a)\oplus    \pi_-(a')=\frac{\I+ T}2\pi_0(a) + \frac{\I- T}2\pi_0(a').
\end{equation}
To avoid  domain issues, we assume that $T\HH \subset
\text{Dom } D$.  We call $T$ a \emph{twisting operator}.

For an almost commutative geometry \eqref{eq:16}, we further assume that \begin{equation}
\label{eq:7}
 T= {\mathcal T}\otimes T_F
 \end{equation}
where ${\mathcal T}\in L^2(\M,S)$ and $T_F\in {\mathcal B}(\HH_F)$. A bounded operator on $\HH$ is not necessarily of this form, but could be (the closure of) a sum of such
operators. However, we restrict to operators \eqref{eq:7}, for they already pave the way to interesting physical
applications beyond the Standard Model.
The selfadjointness of $T$ is to guarantee that the representations
$\pi_\pm$ are involutive. This does not imply that $\T$ and $T_F$
are selfadjoint. However one may
always
restrict to this case.
\begin{lemma}
  \label{sec:minimal-twist}
Let $T=\T \otimes T_F$ be a selfadjoint operator on $L^2(\M
,S)\otimes \HH_F$ that squares to~$\I$. Then there exist two selfadjoint operators $\tilde\T$ on $L^2(\M,S)$ and
$\tilde T_F$ on $\HH_F$, squaring to the identity, such that 
  $T=\tilde\T \otimes\tilde T_F$.
\end{lemma}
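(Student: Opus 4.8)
The plan is to exploit the fact that $T = \T \otimes T_F$ is given as a simple tensor, so a scaling ambiguity is the only thing that can prevent $\T$ and $T_F$ from individually being selfadjoint involutions. First I would observe that for a nonzero simple tensor, the operator $T$ determines $\T$ and $T_F$ up to a nonzero scalar: if $\T \otimes T_F = \T' \otimes T_F'$ with all factors nonzero, then $\T' = \lambda \T$ and $T_F' = \lambda^{-1} T_F$ for some $\lambda \in \C^\times$. Using this, I would rewrite the hypotheses $T^\ast = T$ and $T^2 = \I$ as conditions on the factors: $T^2 = \T^2 \otimes T_F^2 = \I \otimes \I$ forces $\T^2 = \mu \I$ and $T_F^2 = \mu^{-1}\I$ for some $\mu \in \C^\times$; similarly $T^\ast = \T^\ast \otimes T_F^\ast = \T \otimes T_F$ forces $\T^\ast = \nu \T$ and $T_F^\ast = \nu^{-1} T_F$ for some $\nu \in \C^\times$.

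Next I would pin down the scalars. From $\T^\ast = \nu \T$ one gets $\T = (\T^\ast)^\ast = \bar\nu \T^\ast = |\nu|^2 \T$, hence $|\nu| = 1$; write $\nu = e^{2i\theta}$. Rescaling $\T \mapsto e^{i\theta}\T =: \tilde\T$ and correspondingly $T_F \mapsto e^{-i\theta} T_F =: \tilde T_F$ (this leaves $T = \tilde\T \otimes \tilde T_F$ unchanged), one checks $\tilde\T^\ast = \tilde\T$, and then automatically $\tilde T_F^\ast = \tilde T_F$ because $T = T^\ast$ and $\tilde\T$ is a nonzero selfadjoint factor. Now $\tilde\T^2 = \mu\I$ with $\mu \in \R$ since $\tilde\T$ is selfadjoint; in fact $\mu = \langle \tilde\T^2 \psi,\psi\rangle / \|\psi\|^2 = \|\tilde\T\psi\|^2/\|\psi\|^2 > 0$ for $\psi$ with $\tilde\T\psi \neq 0$, so $\mu > 0$. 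Finally absorb $\sqrt\mu$: replace $\tilde\T$ by $\mu^{-1/2}\tilde\T$ and $\tilde T_F$ by $\mu^{1/2}\tilde T_F$, so that $\tilde\T^2 = \I$; then $\tilde T_F^2 = \I$ follows from $T^2 = \I$ together with $\tilde\T^2 = \I$ and $\tilde\T \neq 0$. After this relabelling, both $\tilde\T$ and $\tilde T_F$ are selfadjoint and square to the identity, and $T = \tilde\T \otimes \tilde T_F$, as required.

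The main obstacle is purely bookkeeping: justifying the ``up to a scalar'' uniqueness for simple tensors and tracking that each rescaling of $\T$ is compensated by the reciprocal rescaling of $T_F$ so the product $T$ is genuinely unchanged at every step. One should also be slightly careful that the degeneracy hypotheses guarantee $\T \neq 0$ and $T_F \neq 0$ (otherwise $T = 0$, contradicting $T^2 = \I$ on a nonzero space), so that the cancellation arguments and the existence of vectors with $\tilde\T\psi \neq 0$ are legitimate. None of this requires the manifold structure of $L^2(\M,S)$ — only that it is a nonzero Hilbert space — so the argument is entirely elementary linear algebra on a tensor product of Hilbert spaces.
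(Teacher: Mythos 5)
Your proof is correct and rests on the same underlying idea as the paper's: the only obstruction to $\T$ and $T_F$ being separately selfadjoint involutions is a scalar ambiguity in the tensor factorization, which one absorbs by rescaling. Where you differ is in how you extract the scalars. You invoke the abstract uniqueness-up-to-scalar property of nonzero simple tensors (so $T^\ast=T$ gives $\T^\ast=\nu\T$, $T^2=\I$ gives $\T^2=\mu\I$), then clean up the phase $\nu$ first and the modulus $\mu$ second. The paper instead derives the modulus concretely: it applies $T^\dagger T=\I$ to vectors $\varphi\otimes\psi$ with $\psi$ an eigenvector of the positive operator $T_F^\dagger T_F$, concludes $\T^\dagger\T=\lambda^{-1}\I_\M$, rescales by $\lambda^{\pm 1/2}$, notes $\tilde T_F$ is then unitary, and only afterwards fixes the phase by applying $T=T^\dagger$ to $\varphi\otimes\Psi$ with $\Psi$ an eigenvector of $\tilde T_F$. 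Your version is shorter and more conceptual, at the cost of leaning on the (standard but unproved here) uniqueness lemma; the paper's is longer but fully self-contained and additionally records the unitarity of $T_F$ as a byproduct. Both are sound, and you correctly flag the one point that needs care — that $\T\neq 0$ and $T_F\neq 0$, guaranteed by $T^2=\I$ on a nonzero space, so the cancellations are legitimate.
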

\begin{proof}
  The matrix $T_F^\dag T_F$ being non-zero (otherwise $T$ does not
  square to $\I$) is positive, thus it admits at least one real eigenvalue $\lambda>0$, with associated
  eigenvectors $\psi\in\HH_F$, and all the other non-zero eigenvalues are
  also strictly positive.  For any $\varphi\in
  L^2(\M, S)$, one has 
  \begin{equation}
T^\dagger T (\varphi\otimes\psi) =  \T^\dagger\T \varphi\otimes T_F^\dagger T_F
\psi =\lambda \T^\dagger\T \varphi\otimes\psi.
  \end{equation}
On the other side, by hypothesis $T^\dag T=\I$, that is $T^\dagger T (\varphi\otimes\psi)  =
\varphi\otimes\psi$. Therefore
\begin{equation}
(\lambda\T^\dagger\T - \I_\M) \varphi\otimes\psi = 0 \qquad
\forall \varphi\in L^2(M, S)\, 
\end{equation}
meaning that $\T^\dagger\T$ coincides with the operator of
multiplication of spinors by $\lambda^{-1}$. Repeating the analysis for another
non-zero eigenvalue $\lambda'$ shows that $\T^\dagger\T$ coincides
with the multiplication by ${\lambda'}^{-1}$, so
$\lambda'=\lambda$. This allows to define
\begin{equation}
\label{eq:15}
\tilde \T =\lambda^{-\frac 12}\T,\quad
\tilde T_F= \lambda^{\frac 12} T_F,
\end{equation}
such that $T=\tilde\T\otimes \tilde T_F$
with
\begin{equation}
\tilde\T^\dagger \tilde\T =\lambda^{-\frac 12}\T^\dagger \lambda^{-\frac 12}\T=\lambda^{-1}\T^\dagger\T=\I_\M.
\end{equation}
From $T^\dagger T = T T^\dagger=\I$ then follows that $\tilde
T_F^\dagger \tilde T_F =
\tilde T_F \tilde T_F^\dagger=\I_F$, that is $\tilde T_F$ is unitary,
and so is $T_F$.

To show that $\tilde \T$ and $\tilde T_F$ are selfadjoint, let us
apply $T=T^\dag$ on $\varphi\otimes\Psi$ where $\Psi$ is  an
eigenvector of  $\tilde T_F$, with eigenvalue $\tau\in\C$,
$|\tau|=1$. Using $\tilde T_F^\dag\Psi =\tau^{-1}\Psi$,  one obtains
\begin{equation}
  \tilde\T\varphi\otimes \tau \Psi=\tilde\T^\dag\varphi\otimes
  \tau^{-1}\Psi\quad\forall\varphi\in L^2(\M ,S),
\end{equation}
meaning that 
  $\tau^{-1} \tilde\T^\dagger =\tau \tilde\T$.
Redefining $\tau\tilde\T\rightarrow \tilde\T $ (that is 
$\tau^{-1}\tilde \T^\dagger\rightarrow \tilde\T^\dagger$), the
previous~equation 
shows that $\tilde T$ is selfadjoint. The selfadjointness of $\tilde T_F$
then follows from the one~of~$T$.
\end{proof}
\newpage

\section{Boundedness of the commutator}
\label{sec:bound-comm-1}
 
We investigate the conditions
imposed, on the twisting operator \eqref{eq:7} of an almost commutative
geometry, by the boundedness of the twisted commutator
\begin{equation}
\label{eq:14}
  [\ds\otimes\I_F + \gamma^5\otimes D_F, \pi(a,a')]_\rho,
\end{equation}
for $\pi$ the representation \eqref{eq:3bis} and  $\rho$ the flip
\eqref{eq:4}. For simplicity, we restrict to the case $\pi_F$ is irreducible, that is
${\mathcal B}=\C^2$. 

\begin{lemma}
\label{prop:bound-comm} 
For an almost commutative geometry, the twisted
  commutator $[\ds\otimes\I_F, \pi(a, a')]_\rho$ is bounded for any $(a, a')\in
  \A\otimes\C^2$ if and only if $\T$ anticommutes with $\ds$.
\end{lemma}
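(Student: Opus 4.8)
The plan is to compute the twisted commutator $[\ds\otimes\I_F,\pi(a,a')]_\rho$ explicitly in terms of the operator $T=\T\otimes T_F$ and exploit the tensor-product structure of the almost commutative geometry. Writing $\pi(a,a')=\tfrac{\I+T}2\pi_0(a)+\tfrac{\I-T}2\pi_0(a')$ and $\rho$ the flip, one has $\rho(\pi(a,a'))=\pi(a',a)=\tfrac{\I+T}2\pi_0(a')+\tfrac{\I-T}2\pi_0(a)$. Plugging these into $[\ds\otimes\I_F,\,\cdot\,]_\rho = (\ds\otimes\I_F)\pi(a,a') - \pi(a',a)(\ds\otimes\I_F)$ and using that $\pi_0 = \pi_\M\otimes\pi_F$ commutes with $T$, the aim is to collect the terms into an expression that separates the ``ordinary commutator part'' (involving $[\ds,\pi_\M(f)]$, which is bounded since it is a Clifford multiplication operator) from an ``anomalous part'' proportional to $(\ds\T + \T\ds)\otimes(\text{something})$ acting through $\pi_\M(f)$.

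Concretely, I expect the computation to yield something of the form
\begin{equation}
[\ds\otimes\I_F,\pi(a,a')]_\rho = (\text{bounded terms built from }[\ds,\pi_\M(f)]) + \tfrac12\,\{\ds,\T\}\otimes T_F\,\bigl(\pi_0(a)-\pi_0(a')\bigr),
\end{equation}
after regrouping, where $\{\ds,\T\}=\ds\T+\T\ds$. Since $\pi_\M(f)$ is bounded and $\ds$ has compact resolvent (hence unbounded), the full twisted commutator is bounded for all $(a,a')$ if and only if the anomalous term vanishes for all $(a,a')$, i.e. if and only if $\{\ds,\T\}\otimes T_F\,(\pi_0(a)-\pi_0(a'))$ is bounded for all choices, which — taking $a\neq a'$ and using that $T_F$ is unitary (invertible), by Lemma \ref{sec:minimal-twist} — forces $\{\ds,\T\}=0$, that is $\T$ anticommutes with $\ds$. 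The converse is immediate: if $\T$ anticommutes with $\ds$ the anomalous term is absent and what remains is manifestly bounded.

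The main obstacle will be the bookkeeping in the regrouping step: one must be careful that $\T$ does \emph{not} commute with $\pi_\M(f)$ in general (it need not — $\T$ is only required to commute with $\pi_0$, and a priori $\T$ could fail to be a grading of the manifold factor before the argument runs), so that moving $\T$ past $\pi_\M(f)$ introduces a further commutator $[\T,\pi_\M(f)]$. However, since $\pi_0=\pi_\M\otimes\pi_F$ and $[T,\pi_0(a)]=0$ with $T=\T\otimes T_F$, choosing $a$ of the form $f\otimes 1_{\A_F}$ shows $[\T,\pi_\M(f)]\otimes T_F = 0$, hence $[\T,\pi_\M(f)]=0$ already (again using invertibility of $T_F$); this collapses the bookkeeping and the clean separation above goes through. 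The rest is routine: isolate the unbounded piece, invoke the non-degeneracy of both eigenspaces of $T$ and the unitarity of $T_F$ to conclude $\{\ds,\T\}=0$, and check the converse direction by direct substitution.
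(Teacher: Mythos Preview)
Your strategy coincides with the paper's: decompose the twisted commutator into a bounded part (built from $[\ds,\pi_\M(f)]$) plus the ``anomalous'' piece $\tfrac12\{\ds,\T\}\otimes T_F\,(\pi_0(a)-\pi_0(a'))$, then argue the latter controls the statement. The regrouping and the observation that $[\T,\pi_\M(f)]=0$ are fine.

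There is one genuine gap, however. You slide from ``the anomalous term is bounded'' to ``the anomalous term vanishes'' without justification, writing that the unboundedness of $\ds$ ``forces $\{\ds,\T\}=0$''. This is the crux, and it is \emph{not} automatic: a priori $\{\ds,\T\}$ could be a nonzero bounded operator (indeed, for $\T=\gamma_\M$ it is zero, so bounded). What you must show is that if $\{\ds,\T\}$ is bounded then it is zero. The paper does this by restricting to the $\pm 1$ eigenspaces of $\T$ (not of $T$, as you write): for $\psi$ in the $+1$ eigenspace $\HH_+$ of $\T$, one has $\{\ds,\T\}\psi = \ds\psi + \T\ds\psi = (\I+\T)\ds\psi$, so on $\HH_+$ the anticommutator coincides with $(\I+\T)\ds$, which is unbounded unless identically zero there; similarly on $\HH_-$. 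Your closing remark about ``non-degeneracy of both eigenspaces'' gestures at this, but with the wrong operator and without the actual mechanism. Once you supply this eigenspace argument, the proof is complete and matches the paper's.
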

\begin{proof}
 Using
 \begin{align}
   \label{eq:10}
(\ds\otimes \I_F)(\I\pm T)=(\I\mp T) (\ds\otimes \I_F)\pm\left\{\ds\otimes
   \I_F,T\right\},
 \end{align}
one obtains, omitting the symbol $\pi_0$,
\begin{align}
\label{eq:8}
 [\ds\otimes \I_F, \pi(a, a')]_\rho&= (\ds\otimes \I_F) \pi(a,a')-\pi(a',a)(\ds\otimes \I_F),\\
 &= (\ds\otimes \I_F) \left(\frac{\I+ T}2a + \frac{\I- T}2a'\right) - \left(\frac{\I+ T}2a' + \frac{\I- T}2a\right)
 (\ds\otimes \I_F),\\
\label{eq:12}
&=\frac{\I- T}2 [\ds\otimes \I_F,  a] + \frac{\I+ T}2  [\ds\otimes
  \I_F , a'] +\frac 12\left\{\ds\otimes\I_F,T\right\} (a-a').
\end{align}
For $a=f\otimes m$, then $[\ds\otimes \I_F,  a]= [\ds, f]\otimes m$
is  bounded, being  $(\A,
\HH, D)$ a spectral triple. The same is true for an arbitrary $a$ in
$\A$,  and also for $[\ds\otimes \I_F,  a']$. So the first two terms in
\eqref{eq:12} are bounded.

If $T$ anticommutes with
$\ds\otimes\I_F$, the last term in \eqref{eq:12}  is zero, so that \eqref{eq:8} is
bounded.

Conversely,  assume \eqref{eq:8} is bounded for any  $(a,a')$ in
$\A\otimes \C^2$. This means that the last term in \eqref{eq:12} is
bounded. For $a-a'= 1\otimes m$ with $1$ the constant function
$f(x)=1$ on $\M$, then this last term is (up to a factor $\frac 12$)
\begin{equation}
  \label{eq:9}
  \left\{\ds\otimes\I_F,T\right\} (a-a')= \left\{\ds,\T \right\}
  \otimes T_F m.
\end{equation}
This is bounded if and only if $ \left\{\ds,\T \right\}$ is
bounded. For $\psi$ on  $\HH_+$ ($+1$ eigenspace of $\T$), one has 
\begin{equation}
  \left\{\ds,\T \right\}\psi = \ds\psi + \T\ds\psi = (\I+\T)\ds\psi
\end{equation}
meaning  $\left\{\ds,\T \right\}$ coincides with  $(\I+\T)\ds$ which is
an unbounded operator, unless it is zero. So the restriction of
$\left\{\ds,\T \right\}$ to $\HH_+$ is zero. A similar argument holds
for the restriction to $\HH_-$. Hence the result.
\end{proof}

The finite part of an almost commutative geometry only involves
bounded operator. 
Therefore the boundedness of the twisted commutator \eqref{eq:14} only
depends on the property of $\T$.
\begin{proposition}
\label{sec:bound-comm-2}
 The twisted commutator \eqref{eq:14} is bounded if, and only if, 
 $\T=\pm\gamma_\M$.
\end{proposition}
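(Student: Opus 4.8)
The plan is to reduce the statement to Lemma~\ref{prop:bound-comm} and then to the unicity of the twisting operator of the manifold spectral triple.

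First I would split the twisted commutator \eqref{eq:14} according to the two pieces of $D$,
\[
[\ds\otimes\I_F + \gamma_\M\otimes D_F,\ \pi(a,a')]_\rho = [\ds\otimes\I_F,\ \pi(a,a')]_\rho + [\gamma_\M\otimes D_F,\ \pi(a,a')]_\rho .
\]
The second summand equals $(\gamma_\M\otimes D_F)\,\pi(a,a') - \pi(a',a)\,(\gamma_\M\otimes D_F)$, a composition of bounded operators: $\gamma_\M$ is bounded, $D_F$ is a finite-dimensional matrix, and $\pi(a,a')$, $\pi(a',a)$ are bounded since $\pi_0$ is a representation and $T$ is bounded. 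Hence \eqref{eq:14} is bounded if, and only if, $[\ds\otimes\I_F,\pi(a,a')]_\rho$ is bounded for every $(a,a')\in\A\otimes\C^2$, which by Lemma~\ref{prop:bound-comm} is equivalent to $\T$ anticommuting with $\ds$.

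It remains to see that, among the twisting operators \eqref{eq:7}, the anticommutation $\{\ds,\T\}=0$ singles out $\T=\pm\gamma_\M$. By Lemma~\ref{sec:minimal-twist} one may assume $\T$ selfadjoint with $\T^2=\I_\M$; moreover $\T$ commutes with $\pi_\M(\cinf)$ — take $m=1_{\A_F}$ in $[T,\pi_0(a\otimes m)]=0$ and use that $T_F\neq 0$ — and $\T L^2(\M,S)\subset\text{Dom}\,\ds$ by the domain assumption on $T$ (recall $\gamma_\M\otimes D_F$ is bounded, so $\text{Dom}\,D=\text{Dom}(\ds\otimes\I_F)$). Since $\T$ commutes with multiplication by every smooth function it is itself multiplication by an endomorphism field $x\mapsto T(x)\in\mathrm{End}(S_x)$, smooth by the domain and anticommutation conditions. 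Evaluating $\{\ds,\T\}=0$ on spinors with prescribed value and first derivative at a point separates it into $\{\gamma^\mu,T(x)\}=0$ for every $\mu$ and $\gamma^\mu\nabla_\mu T=0$; irreducibility of the Clifford action on $S_x$ in even dimension forces $T(x)=c(x)\,\gamma_\M(x)$ for a scalar $c$, while selfadjointness makes $c$ real, $\T^2=\I_\M$ gives $c(x)^2=1$, and smoothness with connectedness of $\M$ gives $c\equiv\pm1$. (Alternatively, this is precisely \cite[Prop.~4.2]{Lett.}.) Conversely, if $\T=\pm\gamma_\M$ then $\T$ anticommutes with $\ds$ because $\gamma_\M$ is a grading, so by Lemma~\ref{prop:bound-comm} and the splitting above \eqref{eq:14} is bounded.

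The main obstacle is the last step: passing from the abstract relations ($\T$ selfadjoint, $\T^2=\I_\M$, $[\T,\pi_\M(\cinf)]=0$, $\{\ds,\T\}=0$, domain condition) to the concrete identification $\T=\pm\gamma_\M$ requires the pointwise representation theory of endomorphisms anticommuting with all gamma matrices, together with some care about regularity and connectedness. As just noted, this is exactly what \cite[Prop.~4.2]{Lett.} supplies, so in the write-up I would either reproduce that short argument or simply invoke it.
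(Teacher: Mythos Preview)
Your proof is correct and follows essentially the same route as the paper: split off the bounded piece $[\gamma_\M\otimes D_F,\pi(a,a')]_\rho$, invoke Lemma~\ref{prop:bound-comm} to reduce to $\{\ds,\T\}=0$, and then identify $\T$ via anticommutation with all $\gamma^\mu$. The paper carries out this last step by a Leibniz-rule separation into bounded and unbounded parts (yielding $\{\gamma^\mu,\T\}=0$) and then jumps directly to $\T=\lambda\gamma_\M$ with a scalar~$\lambda$; your localization to an endomorphism field and use of connectedness make explicit what the paper leaves implicit.
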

\begin{proof}
The twisted commutator $[\gamma^5\otimes D_F, \pi(a, a']_\rho$ is
bounded, whether or not $T$ anticommutes with
$\gamma^5\otimes D_F$. So \eqref{eq:14} is bounded iff
$[\ds\otimes\I_2, \pi(a, a')]_\rho$ is bounded, that is by
Prop.~\ref{prop:bound-comm} iff $\T$
anticommutes with $\ds$. Explicitly, with $\nabla_\mu
=\partial_\mu+\omega_\mu$ where $\omega_\mu$ is the spin connection,
this~means
\begin{equation}
\label{eq:31}
  \left\{-i\gamma^\mu\partial_\mu,\T\right\}+\left\{-i\gamma^\mu\omega_\mu,\T\right\}=0.
\end{equation}
The second term is bounded, being $\omega_\mu$ bounded. By the Leibniz rule
satisfied by $\ds$, one has
\begin{align}
   \left\{-i\gamma^\mu\partial_\mu,\T\right\}&=
   -i\gamma^\mu\partial_\mu\T -i\T\gamma^\mu\partial_\mu = 
   -i\gamma^\mu(\partial_\mu\T)   -i\gamma^\mu \T \partial_\mu 
   -i\gamma^\mu\partial_\mu\T ,\\
&=  -i\gamma^\mu(\partial_\mu\T)   -i\left\{\gamma^\mu ,\T \right\}\partial_\mu. 
\end{align}
The first term is bounded, the second one unbounded. For \eqref{eq:31}
to hold, both the bounded part and the unbounded parts must be
zero. The latter condition is equivalent to $\left\{\gamma^\mu ,\T
\right\}=0$, so $\T=\lambda\gamma_\M$, for the only operator that
anti-commutes with all the $\gamma$'s matrices are the multiple of
$\gamma_\M$. 
By lemma \ref{sec:minimal-twist} $\T$ is selfadjoint -  which forces
 $\lambda$ to be real - and
$\T^2=\I_\M$,  which reduces the choice to $\lambda=\pm 1$. 
\end{proof}

\section{order-zero condition}
\label{sec:order-zero-condition}

Not all the axioms of noncommutative geometry have been adapted to the
twisted context. However, those most relevant for physics
(i.e. regarding gauge transformations) do make sense for a twisted
spectral triple \cite{Lett.,Landi:2017aa}. Especially, 
a real structure for a twisted spectral triple is defined as in the
non-twisted case, that is an antilinear operator $J$ such that 
\begin{equation}
J^2=\epsilon \I,\; JD = \epsilon' DJ , \; J\Gamma = \epsilon''\Gamma J
\end{equation}
for some $\epsilon, \epsilon',\epsilon''\in\left\{-1,1\right\}$ (those
  three signs defines the $KO$-dimension of the triple), which implements a representation $\pi^\circ$ of the 
opposite algebra $\A^\circ$,
\begin{equation}
\pi_0^\circ(a^\circ) = J\pi_0(a^*)J^{-1}
\end{equation}
asking to 
commutes with the one of $\A$,
\begin{equation}
\label{eq:6}
[\pi_0(a),   J\pi_0(b^*)J^{-1}]=0 \qquad \forall a, b\in \A.
\end{equation}
This is the order zero condition (the first-order
condition, discussed in the next section).

The
twist-by-grading of a real twisted spectral triple $(\A, \HH,D)$ automatically
satisfies the order zero condition, with the same real structure. Namely, if \eqref{eq:6} holds for $(\A, \HH, D)$, then
for $\pi$ the representation  \eqref{eq:3} of $\A\otimes\C^2$ defined
by the grading one has
\begin{equation}
\label{eq:20}
  [\pi(a,a'),    J\pi(b^*, {b'}^*)J^{-1}]=0 \qquad \forall\,  (a, a'), (b,b')\in \A\otimes\C^2.
\end{equation}
We work out below the
conditions such that the same holds true for the representation $\pi$
\eqref{eq:3bis} induced by the twisting operator.
\begin{proposition}
\label{sec:order-zero-condition-1}
  Let $(\A, \HH, D)$ be a real spectral triple with real structure
  $J$. If $\A$ is a unital algebra, then the order zero condition \eqref{eq:20} for the
  representation $\pi$ in \eqref{eq:3bis} holds true if and only if
  \begin{equation}
\label{eq:27}
 [T, JTJ^{-1}]=0     \quad \text{ and } \quad  [\pi_0(a), JTJ^{-1}]=0\qquad
    \forall a \in\A.
  \end{equation}
\end{proposition}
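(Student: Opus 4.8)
The plan is to expand the twisted order-zero commutator \eqref{eq:20} directly using the explicit form \eqref{eq:3bis} of $\pi$, exploiting that $J$ is antilinear and that $T$ is selfadjoint with $T^2=\I$, so that $JTJ^{-1}$ is again selfadjoint and squares to $\I$, and the four projectors $P_\pm=\tfrac{\I\pm T}{2}$ and $Q_\pm = \tfrac{\I\pm JTJ^{-1}}{2}$ are at our disposal. First I would record that \eqref{eq:3bis} can be rewritten as $\pi(a,a') = P_+\pi_0(a) + P_-\pi_0(a')$ and that, since $J$ is antilinear and intertwines $\pi_0$ with $\pi_0^\circ$, the conjugate term reads $J\pi(b^*,{b'}^*)J^{-1} = Q_+\, J\pi_0(b^*)J^{-1} + Q_-\, J\pi_0({b'}^*)J^{-1}$, because $JP_\pm J^{-1}=Q_\pm$. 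Writing $b^\circ := J\pi_0(b^*)J^{-1}$ and noting $[\pi_0(a),b^\circ]=0$ is the untwisted order-zero hypothesis, I would substitute both expressions into \eqref{eq:20} and collect the four terms coming from the pairs $(P_+,Q_+),(P_+,Q_-),(P_-,Q_+),(P_-,Q_-)$.

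Next I would show sufficiency. If $[T,JTJ^{-1}]=0$ then all four projectors $P_\pm,Q_\pm$ mutually commute; if moreover $[\pi_0(a),JTJ^{-1}]=0$ for all $a$, then $\pi_0(a)$ commutes with each $Q_\pm$, and — taking adjoints and using that $\pi_0$ is a $*$-representation together with the standard consequence of the untwisted order-zero condition that $b^\circ$ commutes with $T$ (since $T\in\pi_0(\A)''$ when $\A$ is unital, or by the symmetric hypothesis applied with roles swapped) — one gets that $b^\circ$ commutes with each $P_\pm$ as well. With all these commutations in hand, each of the four cross-terms collapses to $P_\bullet Q_\bullet [\pi_0(a), b^\circ] = 0$, so \eqref{eq:20} holds. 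The unitality of $\A$ enters here precisely to guarantee $\I\in\pi_0(\A)$, hence to be able to isolate the "pure $T$" pieces; I would flag this as the place where the hypothesis is genuinely used.

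For necessity I would run the argument backwards by specializing the free parameters $(a,a'),(b,b')$. Choosing $a=a'$ (i.e. an element of the form $c\otimes 1_{\mathbb C^2}$) kills the dependence on $T$ in the first factor, reducing \eqref{eq:20} to $[\pi_0(c),\, Q_+ b^\circ + Q_- {b'}^\circ]=0$; then taking $b'=0$ and $b=1_\A$ (using unitality, so that $b^\circ=\I$) yields $[\pi_0(c), Q_+]=0$, i.e. the second condition in \eqref{eq:27}. Feeding that back, one is left with $[\pi_0(c), Q_- b^\circ]=0$ for all $c,b$, which combined with the first-factor analysis — now choosing $a=1_\A$, $a'=0$ so the left factor becomes $P_+$ — forces $[P_+, Q_\pm]=0$ and hence $[T, JTJ^{-1}]=0$, the first condition. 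The main obstacle I anticipate is bookkeeping: making sure that when one specializes the parameters one extracts *exactly* the two stated conditions and nothing weaker or stronger, and handling the interplay between $[\pi_0(a),JTJ^{-1}]=0$ and its $J$-conjugate (that $JTJ^{-1}$ commuting with $\pi_0(\A)$ is equivalent, via $T\in\pi_0(\A)''$, to $T$ commuting with $\pi_0^\circ(\A^\circ)$) cleanly rather than circularly. Care is also needed that antilinearity of $J$ does not introduce spurious complex conjugates when moving scalars past $J$, but since all the operators $T$, $P_\pm$ are fixed (parameter-independent) this does not actually bite.
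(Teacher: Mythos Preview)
Your strategy is essentially the paper's: you expand $\pi(a,a')$ and $J\pi(b^*,{b'}^*)J^{-1}$, split into four cross-terms, and specialize to the unit to extract the two conditions. The only difference is cosmetic --- you work in the projector basis $P_\pm=\tfrac{\I\pm T}{2}$, $Q_\pm=\tfrac{\I\pm JTJ^{-1}}{2}$, whereas the paper passes to the sum/difference variables $\alpha=a+a'$, $\alpha'=a-a'$ (so that $\pi(a,a')=\tfrac12(\pi_0(\alpha)+T\pi_0(\alpha'))$) and then isolates the four bilinear pieces directly. Both decompositions carry the same information and both use unitality at the same point (plugging in $1_\A$ to strip off the algebra factor).

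There is, however, one genuine slip in your sufficiency argument. To conclude that $b^\circ=J\pi_0(b^*)J^{-1}$ commutes with $T$ you invoke ``$T\in\pi_0(\A)''$ when $\A$ is unital''. This is false in general: the hypothesis on $T$ places it in the \emph{commutant} $\pi_0(\A)'$, not the bicommutant, and $b^\circ$ also lies in $\pi_0(\A)'$ by the untwisted order-zero condition --- two elements of the commutant need not commute with each other. The correct route (which the paper uses, and which your phrase ``symmetric hypothesis with roles swapped'' gestures at) is to conjugate the assumed relation $[\pi_0(a),JTJ^{-1}]=0$ by $J$: since $J^2=\epsilon\I$ one gets $J(JTJ^{-1})J^{-1}=T$, hence $[J\pi_0(a)J^{-1},T]=0$ for all $a$, i.e.\ $[b^\circ,T]=0$. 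Once you replace the bicommutant claim by this $J$-conjugation step, your argument goes through and matches the paper's.
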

\begin{proof}
By easy manipulations, one has
  \begin{align}
\nonumber   [\pi(a,a'), J\pi(b, b')J^{-1}]&= \left[\frac12\left(\I+T\right) \pi_0(a) + \frac
                                    12\left(\I-T\right)\pi_0(a'), J\left(\frac
                                    12\left(\I+T\right)\pi_0(b) +
                                    \frac
                                    12\left(\I-T\right)\pi_0(b')\right)J^{-1}\right],\\
 \label{eq:21}
    &= \frac 12\left[\pi_0(\alpha) +T\pi_0(\alpha'),J\left(\pi_0(\beta) +T\pi_0(\beta')\right)J^{-1}\right],
  \end{align}
  where we write
  \begin{equation}
\label{eq:35}
    \alpha:= a+ a',\; \alpha'=a-a',\; \beta:=b+b',\; \beta'=b-b'.
  \end{equation}
The order zero condition is equivalent to \eqref{eq:21} being
  zero for any $\alpha, \alpha', \beta, \beta'\in \A$, that is (omitting the
  symbol of representation and denoting with an hat the adjoint action
  of $J$, e.g. $\hat T=JTJ^{-1}$)
  \begin{align}
\label{eq:22}
    &\left[\alpha, \hat \beta \right]=0,\quad\;\,\,
      \left[\alpha, JT\beta J^{-1}\right]=0,\\
    \label{eq:23}&\left[T\alpha, \hat\beta\right]=0,\quad\left[T\alpha, JT\beta
      J^{-1}\right]=0\qquad \forall \alpha,\beta\in\A.
  \end{align}

The first condition \eqref{eq:22} is the order zero condition for
$(\A, \HH,D)$, so it is always true by hypothesis. The second
condition \eqref{eq:22} writes
\begin{equation}
  0= \left[\alpha, \hat T\, \hat\beta
   \right]=\hat T\left[\alpha,  \hat\beta \right]+
  \left[\alpha, \hat T\right]\hat\beta.
\end{equation}
The first term is zero by the order zero condition. The second term is zero for any $\alpha, \beta$ if and only
if (consider the case $\beta$ is the unit of $\A$) 
\begin{equation}
\label{eq:24}
  \left[\alpha, \hat T\right] \quad \forall \alpha\in \A.
\end{equation}

For the same reasons, the first equation \eqref{eq:23} written as
\begin{equation}
0=  T[\alpha, \hat \beta ] + [T,\hat\beta ]\alpha 
\end{equation}
is equivalent to $[T, \hat \beta ]=0$ that is, multiplying by
$J^{-1}$ on the left and $J$
on the right,  
\begin{equation}
0=J^{-1} TJ\beta- \beta J^{-1}T J= [J^{-1}T J, \beta].
\end{equation}
Remembering that $J^{-1}=\epsilon J$, this is equivalent to
\eqref{eq:24}.  Remains the second condition \eqref{eq:23}, which
is equivalent to
\begin{equation}
\label{eq:25}
  \hat  T [T\alpha,  \hat\beta ] + [T\alpha, \hat T ] \hat \beta =0,
\end{equation}
It implies (consider $\alpha=\beta=1_\A$)
\begin{equation}
\label{eq:26}
  [T, \hat T]=0.
\end{equation}
Therefore  the order zero condition implies \eqref{eq:24}
and \eqref{eq:26}. To show that these two conditions are sufficient, the only points that remains to show is that they
imply the second condition \eqref{eq:23}, that is \eqref{eq:25}. The
first term of this equation is zero as soon as  \eqref{eq:24} holds
(as shown studying the first term of \eqref{eq:23} ). The second term
is obviously zero as soon as both $T$ and $\alpha$ commute with $\hat T$.
\end{proof}

For an almost commutative geometry 
the conditions \eqref{eq:27} are equivalent to their
restrictions to the finite dimensional spectral triple.
To show this, let us absorb the sign ambiguity of  proposition \ref{sec:bound-comm-2} 
redefining $T_F\to \pm T_F$. The  twisting operator
\eqref{eq:7} of an almost commutative geometry is thus
\begin{equation}
\label{eq:33}
  T=\gamma_\M\otimes T_F.
\end{equation}
Since $\gamma_\M$ commutes with the representation $\pi_\M$ \eqref{eq:1}, 
requiring $T$ to commute with the representation $\pi_0$ \eqref{eq:2}
implies 
\begin{equation}
\label{eq:34}
  [T_F, \pi_F(m)]=0\quad \forall m\in \A_F.
\end{equation}
\begin{corollary}
  \label{sec:order-zero-condition-2}
The minimal twist of a real, almost commutative, geometry by $T$ as in
\eqref{eq:33}  satisfies the order zero condition
if and only if
\begin{equation}
  \label{eq:30}
[T_F, J_F T_F J_F^{-1}]=0 
\quad\text{ and }\quad[T_F, J_F \,\pi_F(m) \, J_F^{-1}]=0 \quad \forall
 m\in \A_F.
\end{equation}
\end{corollary}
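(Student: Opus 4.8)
The plan is to apply Proposition~\ref{sec:order-zero-condition-1} to the almost commutative geometry \eqref{eq:16} and then check that its two conditions \eqref{eq:27} reduce to the finite-dimensional statements \eqref{eq:30}. The inputs I would use are that the real structure of \eqref{eq:16} factorizes as $J=J_\M\otimes J_F$, with $J_\M$ the charge conjugation on $L^2(\M,S)$ and $J_F$ the real structure of $(\A_F,\HH_F,D_F)$, and that $J_\M$ commutes or anticommutes with $\gamma_\M$ according to the $KO$-dimension of $\M$, i.e. $J_\M\gamma_\M J_\M^{-1}=\kappa\,\gamma_\M$ with $\kappa=\pm1$. Since $\A=\cinf\otimes\A_F$ is unital exactly when $\A_F$ is, and this is what is needed for Proposition~\ref{sec:order-zero-condition-1} to apply, the corollary will then follow from a direct computation in the tensor product. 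Writing $\widehat X:=JXJ^{-1}$, the two conditions to analyse are $[T,\widehat T]=0$ and $[\pi_0(a),\widehat T]=0$ for all $a\in\A$.

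First I would compute $\widehat T$ for $T=\gamma_\M\otimes T_F$: using the factorization of $J$ and $J_\M\gamma_\M J_\M^{-1}=\kappa\gamma_\M$ one gets $\widehat T=\kappa\,\gamma_\M\otimes\widehat{T_F}$, with $\widehat{T_F}:=J_FT_FJ_F^{-1}$. Since $\gamma_\M^2=\I_\M$, the first condition in \eqref{eq:27} then becomes
\begin{equation}
 [T,\widehat T]=\kappa\,\I_\M\otimes[T_F,\widehat{T_F}],
\end{equation}
which vanishes if and only if $[T_F,\widehat{T_F}]=0$, the first equation in \eqref{eq:30}. For the second condition, writing $a=f\otimes m$ and using that $\pi_\M(f)$ commutes with $\gamma_\M$ by definition of a grading, one obtains
\begin{equation}
 [\pi_0(a),\widehat T]=\kappa\,\pi_\M(f)\gamma_\M\otimes[\pi_F(m),\widehat{T_F}].
\end{equation}
If this holds for every $a\in\A$, then in particular for $a=1\otimes m$ (this is where the unit of $\cinf$ enters) it forces $\gamma_\M\otimes[\pi_F(m),\widehat{T_F}]=0$, hence $[\pi_F(m),\widehat{T_F}]=0$ for all $m$ because $\gamma_\M\neq0$. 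Conversely, if $[\pi_F(m),\widehat{T_F}]=0$ for every $m\in\A_F$, linearity in $a$ makes the displayed commutator vanish on all of $\cinf\otimes\A_F$. Finally I would rewrite $[\pi_F(m),\widehat{T_F}]=0$ by conjugating with $J_F$: multiplying on the left by $J_F^{-1}$ and on the right by $J_F$, and using $J_F^{-1}\pi_F(m)J_F=J_F\pi_F(m)J_F^{-1}$ (a consequence of $J_F^2=\pm\I_F$), this is equivalent to $[T_F,J_F\pi_F(m)J_F^{-1}]=0$, the second equation in \eqref{eq:30}.

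The step I would be most careful about is the behaviour of $J$ and $\gamma_\M$ on the product geometry: one must justify that the real structure of \eqref{eq:16} is indeed $J_\M\otimes J_F$, record correctly the sign $\kappa$ from $J_\M\gamma_\M=\kappa\gamma_\M J_\M$, and use that $\gamma_\M$ commutes with $\pi_\M$, so that the two tensor factors genuinely decouple. Once that is in place the argument is essentially the ``evaluate on the unit'' trick already exploited in Proposition~\ref{sec:order-zero-condition-1}, but now applied to the manifold factor rather than to the algebra: it is precisely non-vanishing of $\gamma_\M$ (and of $\I_\M$) that lets one strip off the continuous factor and turn the global conditions \eqref{eq:27} into the purely finite-dimensional ones \eqref{eq:30}.
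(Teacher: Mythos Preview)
Your proposal is correct and follows essentially the same approach as the paper: compute $JTJ^{-1}$ using the sign relation $J_\M\gamma_\M=\kappa\gamma_\M J_\M$ (the paper writes this with $\epsilon''$ and denotes $J_\M$ by $\mathcal J$), then reduce both conditions \eqref{eq:27} to their finite-dimensional counterparts. Your treatment of the equivalence $[\pi_F(m),\widehat{T_F}]=0\Leftrightarrow[T_F,J_F\pi_F(m)J_F^{-1}]=0$ via $J_F^{-1}=\epsilon J_F$ is exactly what the paper invokes in its last line.
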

\begin{proof}
Since ${\mathcal J}\gamma_\M=\epsilon''\gamma_\M{\mathcal J}$, then 
\begin{equation}
  JTJ^{-1}= {\mathcal J}\gamma_M{\mathcal J}^{-1}\otimes J_F T_F
  J_F^{-1} = \epsilon'' \gamma_\M\otimes J_F T_FJ_F^{-1}.
\end{equation}
The conditions \eqref{eq:27} then become
  \begin{equation}
    \epsilon''\I\otimes[T_F, J_F T_F J_F^{-1}]=0, \quad
    \epsilon''\gamma_M f\otimes [m, J_F T_F J_F^{-1}]=0 \quad \forall
    f\otimes m\in \A.
    \label{eq:29}
  \end{equation}
  This is equivalent to \eqref{eq:30} 
  using that
  $[m, J_F T_F J_F^{-1}]=0$ is equivalent to
  $[T_F, J_F m J_F^{-1}]~=~0$.
\end{proof}

Since $T_F$ commutes with the representation of $\A_F$ by \eqref{eq:34}, the two
conditions \eqref{eq:30} are automatically satisfied if $T_F$ commutes or
anticommutes with $J_F$, as required for a grading operator. However, by the
order zero condition of the initial triple, these
conditions together with \eqref{eq:34} are also satisfied if $T_F$ is the representation of any
element in the center of  $\A_F$, not necessarily a grading. Furthermore these are not the only
 possibilities: think for instance of a spectral triple obtained by
taking a subalgebra $\A'_F$ of a spectral triple $(\A_F, \HH_F,
D_F)$. Then any elements in the center of $\A_F$ can be taken as
$T_F$, even if it is not in~$\A'_F$.

\section{Twisted first order condition}
\label{sec:twisted-first-order}

The operator $T_F$ is further constrained if one takes into
account the twisted version of the first order condition:
\begin{equation}
\label{eq:17}
 [ [D, \pi((a,a'))]_\rho,\, J\pi_0((b^*,{b'}^*))J^{-1}]_{\rho^\circ} =0 \quad
 \forall (a,a'),\,  (b,b')\in\A\otimes \C^2
\end{equation}
where $\rho^\circ$ is the automorphism of the opposite algebra
$(\A\otimes\C^2)^\circ\simeq \A^\circ\oplus\A^\circ$ induced by $\rho$:
\begin{equation}
\label{eq:18}
 \rho^\circ((a^\circ, {a'}^\circ)) := (\rho^{-1}((a,a'))^\circ=
 (a', a)^\circ=({a'}^\circ, a^\circ).
\end{equation}

\begin{proposition}
\label{sec:twisted-first-order-2}
  Consider the minimal twist by $T$ of a real spectral triple, such that the order zero condition holds. Then the twisted first-order condition 
 holds if and only if
\begin{equation}
  \label{eq:30fo}
\left\{\left\{D, T\right\}, J T J^{-1}\right\}=0 \quad\text{ and }\quad
[\left\{D,T\right\}, J \,\pi_0(a) \, J^{-1}]=0 \quad \forall
 a\in \A.
\end{equation}
\end{proposition}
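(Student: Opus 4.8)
The plan is to mimic the structure of the proof of Proposition \ref{sec:order-zero-condition-1}, replacing the (untwisted) order zero computation by the twisted first order computation. I would start by expanding $[D, \pi((a,a'))]_\rho$ explicitly using \eqref{eq:12} (with $\gamma_\M\otimes D_F$ reinstated, but the argument is representation-independent at this level): writing $\alpha = a+a'$, $\alpha' = a-a'$ as in \eqref{eq:35}, the twisted commutator decomposes into a piece of the form $\frac{\I-T}{2}[D,\alpha]+\dots$ plus the crucial term $\frac12\{D,T\}\pi_0(\alpha')$. The terms involving ordinary commutators $[D,\pi_0(\cdot)]$ already satisfy the \emph{untwisted} first order condition of the initial triple (with the right sign bookkeeping coming from $\rho^\circ$ as in \eqref{eq:18}), so they contribute nothing new; the entire content of the twisted first order condition must therefore come from controlling $\{D,T\}$.

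Next I would feed this expansion into the outer twisted commutator $[\,\cdot\,, J\pi_0((b^*,{b'}^*))J^{-1}]_{\rho^\circ}$. Since $\rho^\circ$ acts on $\A^\circ\oplus\A^\circ$ by the flip, the outer bracket with the $J$-conjugated element splits, after introducing $\beta = b+b'$, $\beta' = b-b'$, into an ordinary commutator with $\hat\beta := J\pi_0(\beta)J^{-1}$ (the $\rho^\circ$-twist becoming trivial on this combination) plus a commutator with $JT\pi_0(\beta)J^{-1}$, exactly parallel to equations \eqref{eq:22}–\eqref{eq:23}. Running through the four resulting conditions and using the Leibniz-type identities $\{D,T\alpha\} = \{D,T\}\alpha + \dots$ together with the already-assumed order zero conditions \eqref{eq:27}, everything collapses onto the two statements: $\{D,T\}$ must anticommute with $JTJ^{-1}$, and $\{D,T\}$ must commute with every $J\pi_0(a)J^{-1}$. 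The anticommutator (rather than commutator) in the first condition of \eqref{eq:30fo} is the one subtle sign: it arises because $\{D,T\}$ is the object that gets conjugated past both a $T$ and a $J$, and tracking the interplay of $JD=\epsilon' DJ$ with $JTJ^{-1}$ produces the sign flip — I would verify this by a direct computation on the eigenspaces $\HH_\pm$ of $T$, as was done for the boundedness lemma.

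For the "only if" direction I would, exactly as in the proof of Proposition \ref{sec:order-zero-condition-1}, specialize the free algebra elements to extract necessity: taking $b$ (equivalently $\beta$) to be the unit $1_\A$ kills $\hat\beta$-dependence and isolates $[\{D,T\}, \hat T]$-type terms, while taking $a=b=1_\A$ isolates the purely operatorial relation between $\{D,T\}$ and $JTJ^{-1}$. Here the unitality hypothesis on $\A$ is again essential. For the "if" direction I would check that \eqref{eq:30fo} plus the order zero conditions plus the initial first order condition make all four derived conditions vanish term by term.

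The main obstacle I anticipate is purely bookkeeping: keeping straight the three sign families — the $\rho$/$\rho^\circ$ flip, the $KO$-signs $\epsilon,\epsilon',\epsilon''$ entering $JD=\epsilon'DJ$ etc., and the $\pm$ on the $T$-eigenspaces — simultaneously, so that the doubly-twisted commutator in \eqref{eq:17} genuinely reduces to the clean form \eqref{eq:30fo} with the anticommutator appearing in the first relation and a plain commutator in the second. A secondary point needing care is that $T\HH\subset\mathrm{Dom}\,D$ was only assumed for $T$ itself; one should check $\{D,T\}$ and its $J$-conjugate are densely defined wherever the manipulations require it, but since in the almost commutative case $T=\gamma_\M\otimes T_F$ and $\{D,T\}$ turns out to involve only the bounded finite-dimensional data (as in Lemma \ref{prop:bound-comm} and the remarks following it), this is not a serious difficulty.
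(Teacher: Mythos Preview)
Your overall plan---expand the inner twisted commutator, split the outer one against $\hat\beta$ and $\hat T\hat\beta'$, specialise to units for necessity, then verify sufficiency---is exactly what the paper does. But two pieces of your sign reasoning are wrong, and one of them is a genuine gap.

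First, the anticommutator in the outer bracket does \emph{not} come from the $KO$-signs $\epsilon,\epsilon',\epsilon''$ or from ``conjugating past a $T$ and a $J$''. It comes directly from the $\rho^\circ$-flip. Writing $J\pi(b,b')J^{-1}=\tfrac12\hat\beta+\tfrac12\hat T\hat\beta'$ and $J\pi(b',b)J^{-1}=\tfrac12\hat\beta-\tfrac12\hat T\hat\beta'$, the outer twisted commutator of any $X$ is
\[
[X,J\pi(b,b')J^{-1}]_{\rho^\circ}=\tfrac12[X,\hat\beta]+\tfrac12\{X,\hat T\hat\beta'\},
\]
so the $\hat T\hat\beta'$-piece is an anticommutator from the outset, for every term of $X$. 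No $JD=\epsilon'DJ$ is needed here, and checking on eigenspaces of $T$ will not produce this sign.

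Second, and more importantly, your claim that ``the terms involving ordinary commutators $[D,\pi_0(\cdot)]$ already satisfy the untwisted first order condition \dots\ so they contribute nothing new'' is false. One of the four cross-terms is $\{[D,\alpha],\hat T\hat\beta'\}$, an \emph{anticommutator} of an ordinary $[D,\alpha]$ with $\hat T\hat\beta'$, and this does not vanish by the first order condition of the initial triple. In the paper this term is handled by rewriting it (using $[\alpha,\hat T]=0$ from the order zero hypothesis) as $[\{D,\hat T\},\alpha]\hat\beta'$, and it is precisely this term, specialised at $\beta'=1_\A$, that yields the second condition in \eqref{eq:30fo}. If you discard the $[D,\alpha]$-terms as you propose, you will not be able to derive that condition.
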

\begin{proof}
For the representation \eqref{eq:3bis} and omitting the symbol
of representation $\pi_0$, one has
\begin{equation*}
  [D,\pi(a,a')]_\rho= \frac 12 D\left((\I+T)a + (\I-T)a' \right)- \frac 12\left((\I+T)a' + (\I-T)a \right)D= \frac 12 [D, a+a'] + \frac 12\left\{D, T(a-a')\right\},
\end{equation*}
while, denoting $\hat b:=JbJ^{-1}$ the conjugation by $J$,
\begin{equation}
  J\pi(b,b')J^{-1}= \frac12 J\left( (\I+T)b +
    (\I-T)b'\right)J^{-1}=\frac12(\hat b +\widehat{b'})+ J T(b-b')J^{-1}.
\end{equation}
So the twisted commutator \eqref{eq:17} (with $b,b'$ instead of $b^*,
{b'}^*)$ is the sum of a term of order
$0$~in~$T$,
\begin{equation}
 \frac 14\left[ [D,a+a'], \hat b +\widehat{b'}\right]
\end{equation}
 a term of order $1$,
 \begin{equation}
\label{eq:19}
    \frac 14\left\{ [D,a+a'],  J
      T(b-b')J^{-1}\right\}+  \frac 14\left[ \left\{D,
        T(a-a')\right\},  \hat b +\hat b'\right],
 \end{equation}
and a term of order $2$,
\begin{equation}
   \frac 14\left\{ \left\{D,
        T(a-a')\right\},  J
      T(b-b')J^{-1}
       \right\}. 
\end{equation}

The term of order $0$ is always zero by the first order condition of the
initial triple. The first component of the term of order $1$ must
vanish independently: indeed, for $a=a'$, the second term in~\eqref{eq:19} as well as the term of order $2$ are zero, so the twisted
first-order condition reduces~to
\begin{equation}
  \label{eq:36}
\left\{ [D,\alpha],  \hat T
  \hat\beta'\right\}=0 \quad \forall \alpha=a+a',\; \beta'=b-b'.
\end{equation}
By the twisted
first-order condition of the initial triple, one has
\begin{align}
  \label{eq:37}
\left\{ [D,\alpha],  \hat T
  \hat\beta'\right\}&=  [D,\alpha]  \hat T
  \hat\beta' + \hat T
  \hat\beta' [D,\alpha] = [D,\alpha]  \hat T
  \hat\beta' + \hat T
 [D,\alpha] \hat\beta' ,\\
\label{eq:42}
&=\left\{D,\hat T\right\}\alpha\hat\beta' - \alpha\left\{D,\hat T\right\}\hat\beta'
\end{align}
where the second equation follows developing the commutators, and
using that $\hat T$ commutes with $\alpha$ by proposition
\ref{sec:order-zero-condition-1}. In particular, for $\beta=1_\A$  one gets that \eqref{eq:36}
- hence the twisted first-order condition -  implies
\begin{equation}
\label{eq:41}
 \left[ \left\{D,\hat T\right\},\alpha\right]=0 \quad \forall \alpha\in \A.
\end{equation}

As well, the term of order $2$ must vanish independently: for $a=-a'$,
$b'=-1_\A$, $b=-2b'$ then the term of order $1$ vanishes, so the
twisted first-order condition reduces to
\begin{equation}
   \left\{ \left\{D,
        T\alpha'\right\}, 
      \hat T
       \right\} = 0 \quad \forall \alpha'=a-a'\in\A. 
\end{equation}
In particular, for $\alpha'=1_\A$, one gets that the twisted
first-order condition implies
\begin{equation}
\label{eq:45}
\left\{\left\{D,T\right\},
    \hat T\right\}=0.
\end{equation}

Therefore \eqref{eq:41} and \eqref{eq:45} are necessary to
get the twisted first-order condition. Let us show  they are
sufficient conditions. If \eqref{eq:41} holds, then \eqref{eq:42}
vanishes for any $\alpha, \hat\beta'$, meaning the first component of the
term of order $1$ vanishes for any $a,a',b,b'$. The same is true for
the second component since, with $\beta=\hat b+\hat b'$, the later writes
\begin{align}
  \label{eq:43}
\left[\left\{D,T\alpha'\right\},\hat\beta\right]&=\left[ [D,\alpha']T , \hat\beta\right]+ \left[\alpha'\left\{D,T\right\} ,\hat\beta\right]
\end{align}
where we use
\begin{equation}
  \label{eq:44}
\left\{D,T\alpha'\right\}= [D,\alpha']T + \alpha'\left\{D,T\right\}
\end{equation}
obtained by direct computation, with $T$ commuting with
$\alpha'$ by definition of twisting operator. The first
commutator in \eqref{eq:43} vanishes because $\hat \beta$ commutes with
both $[D, \alpha']$ (by the first order condition of the initial
triple) and with $T$ (by the second equation \eqref{eq:27} rewritten as $[T,\hat
a]=0$ for all $a\in\A$). The second commutator in \eqref{eq:43}
vanishes as well since $\bar\beta$ commutes with both $\alpha'$ (by
the order zero condition of the initial triple) and with
$\left\{D,T\right\}$ by \eqref{eq:41} rewritten as
\begin{equation}
\label{eq:46b}
\left[\left\{D,T\right\}, \hat\alpha\right]=0\quad \forall a\in\A.
\end{equation}
Finally, by \eqref{eq:44}  the term of order $2$ writes
\begin{equation}
\label{eq:39}
\left\{ \left\{D,   T\alpha'\right\},  \hat T \hat\beta'
\right\}= \left\{[D,\alpha']T , \hat T \hat\beta'
\right\}+ \left\{\alpha'\left\{D,T\right\}, \hat T \hat\beta'
\right\}.
\end{equation}
The second anti-commutator vanishes, for $\hat T\hat\beta'$
anticommutes with $\left\{D,T\right\}$ (by \eqref{eq:45} and
\eqref{eq:46b}) but commutes with $\alpha$ (by \eqref{eq:27} and the
order zero condition). The first anti-commutator vanishes as well, for
$\hat T\hat\beta'$ commutes with $T$ (by the first equation
\eqref{eq:27} and the order zero condition) while it anticommutes with
$[D,\alpha']T$. The latter assertion follows from the observation that
$\hat T\hat\beta'$ anticommutes with $[D,\alpha']$, since $\hat\beta'$
commutes with it (by the first order condition of the initial triple)
while $T$ anti-commutes with $[D,\alpha']$, as can be seen
from \eqref{eq:46b} noticing that
\begin{align}
\left[\left\{D, T\right\}, \hat\alpha\right]  &=DT\hat\alpha +
TD\hat\alpha - \hat\alpha DT - \hat\alpha TD,\\
&=[D,\hat\alpha]  T +
T[D,\hat\alpha] =\left\{ [D,\hat\alpha],  T\right\}.
\end{align}
So \eqref{eq:45} and \eqref{eq:41} (or equivalently \eqref{eq:46b})
are equivalent with the twisted first-order condition. Hence the result. \end{proof}

For an almost commutative geometry, these conditions are equivalent to
their restriction of the finite dimensional space.

\begin{corollary}
  \label{sec:twisted-first-order-1}
Consider the minimal twist of a real, almost commutative, geometry by
$T=\gamma_\M\otimes T_F$ such that the order zero condition
holds. Then the twisted first-order condition holds if and only if
\begin{equation}
  \label{eq:30fin}
\left\{\left\{D_F, T_F\right\}, J_F T_F J_F^{-1}\right\}=0 \quad\text{ and }\quad
[\left\{D_F,T_F\right\}, J_F \,\pi_F(m) \, J_F^{-1}]=0 \quad \forall
 m\in \A_F.
\end{equation}
\end{corollary}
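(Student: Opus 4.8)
The plan is to insert the factorized twisting operator $T=\gamma_\M\otimes T_F$ and the factorized real structure $J=\J\otimes J_F$ into the two conditions \eqref{eq:30fo} of Proposition \ref{sec:twisted-first-order-2}, and to show that the manifold degrees of freedom factor out cleanly, leaving exactly \eqref{eq:30fin}. First I would compute $\{D,T\}$ for $D=\ds\otimes\I_F+\gamma_\M\otimes D_F$ and $T=\gamma_\M\otimes T_F$. Since $\gamma_\M$ anticommutes with $\ds$ (which is precisely why $\T=\gamma_\M$ was forced in Proposition \ref{sec:bound-comm-2}), the cross term $\ds\gamma_\M\otimes T_F+\gamma_\M\ds\otimes T_F$ vanishes, while $\gamma_\M^2=\I_\M$ turns the remaining piece into $\I_\M\otimes\{D_F,T_F\}$. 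Thus $\{D,T\}=\I_\M\otimes\{D_F,T_F\}$, an operator acting trivially on the spinor space. This is the key simplification: the potentially unbounded manifold part has disappeared from $\{D,T\}$ entirely.

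Next I would compute $JTJ^{-1}=\J\gamma_\M\J^{-1}\otimes J_FT_FJ_F^{-1}=\epsilon''\,\gamma_\M\otimes J_FT_FJ_F^{-1}$, using $\J\gamma_\M=\epsilon''\gamma_\M\J$ exactly as in the proof of Corollary \ref{sec:order-zero-condition-2}. Plugging these into the first condition of \eqref{eq:30fo}: $\{\{D,T\},JTJ^{-1}\}=\epsilon''\,\gamma_\M\otimes\{\{D_F,T_F\},J_FT_FJ_F^{-1}\}$. Since $\gamma_\M$ is invertible, this vanishes if and only if the finite part does, giving the first equation of \eqref{eq:30fin}. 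For the second condition, write $J\pi_0(a)J^{-1}$ for $a=f\otimes m$ as $\J\pi_\M(f)\J^{-1}\otimes J_F\pi_F(m)J_F^{-1}$; the manifold factor $\J\pi_\M(f)\J^{-1}$ is just multiplication by some function (the opposite algebra of $\cinf$ acts by multiplication) and in particular commutes with $\I_\M$. Hence $[\{D,T\},J\pi_0(a)J^{-1}]=(\J\pi_\M(f)\J^{-1})\otimes[\{D_F,T_F\},J_F\pi_F(m)J_F^{-1}]$, which vanishes for all $a$ iff it vanishes for all $m$ (take $f=1$), yielding the second equation of \eqref{eq:30fin}. Conversely, \eqref{eq:30fin} plainly implies \eqref{eq:30fo} by the same factorizations.

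The only genuinely delicate point is bookkeeping the sign $\epsilon''$ and checking that no factor of $\gamma_\M$ survives in a way that could make the manifold part nontrivial — but the two observations that $\{D,T\}$ is supported purely on $\HH_F$ and that $\gamma_\M$ and the (opposite) algebra multiplication on $L^2(\M,S)$ are invertible/scalar-valued on the spinor indices dispose of this. One should also note that, just as in Corollary \ref{sec:order-zero-condition-2}, the hypothesis that the order zero condition holds is what licenses the use of Proposition \ref{sec:twisted-first-order-2} in the first place, so the corollary is really a corollary of that proposition specialized to \eqref{eq:16}. I expect the whole argument to be a short computation with no real obstacle; the content is entirely in the structural fact, already established, that boundedness pins $\T$ to $\gamma_\M$ and hence collapses the spinorial dependence of every term appearing in \eqref{eq:30fo}.
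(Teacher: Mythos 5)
Your proposal is correct and follows essentially the same route as the paper: compute $\{D,T\}$, observe that the $\ds\otimes\I_F$ contribution drops out because $\gamma_\M$ anticommutes with $\ds$ so $\{D,T\}=\I_\M\otimes\{D_F,T_F\}$, and then factor each condition in \eqref{eq:30fo} into a nonzero manifold tensor factor times the corresponding finite-dimensional condition of \eqref{eq:30fin}. Your write-up is in fact a bit more explicit than the paper's (which compresses the treatment of the second condition of \eqref{eq:30fo} and has a slight labeling slip there), spelling out both the $JTJ^{-1}=\epsilon''\gamma_\M\otimes J_FT_FJ_F^{-1}$ computation and the factorization $[\{D,T\},J\pi_0(f\otimes m)J^{-1}]=\J\pi_\M(f)\J^{-1}\otimes[\{D_F,T_F\},J_F\pi_F(m)J_F^{-1}]$ together with the ``take $f=1$'' step for the converse; nothing is missing.
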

\begin{proof}
For $D$ as in \eqref{eq:16} and $T=\gamma_\M\otimes T_F$, one has
\begin{equation}
  \left\{D, T\right\}=\left\{\gamma_\M\otimes D_F, \gamma_\M\otimes T_F\right\}= \I\otimes\left\{D_F, T_F\right\},
\end{equation}
  so the second equation \eqref{eq:30fo} is equivalent to
  $\left\{\left\{D_F, T_F\right\}, J_F T_F J_F^{-1}\right\}=0 $. The
  first equation \eqref{eq:30fo} reads
  \begin{equation}
   0=\left\{\I\otimes\left\{D_F, T_F\right\}, {\mathcal J}\gamma_\M{\mathcal
           J}^{-1}\otimes J_FT_F J_F^{-1}\right\}=  {\mathcal J}\gamma_\M{\mathcal
           J}^{-1}\otimes \left\{\left\{D_F, T_F\right\}, J_F T_F
           J_F^{-1}\right\}
  \end{equation}
which is equivalent to the first equation \eqref{eq:30fin}.
\end{proof}

These conditions are automatically satisfied if $T_F$ anticommutes
with $D_F$, that is if the twisting operator $T$ is a grading. But
this may not be the only possibility.

\section{Conclusion}
In recent time several modifiations of the framework of
noncommutative geometry have been proposed to get extra scalar fields,
beyond the Standard Model: removing the first order condition
\cite{Chamseddine:2013fk,Chamseddine:2013uq}, twisting the real
structure \cite{T.-Brzezinski:2016aa,Magee:2020aa,Dabrowski:2019ac,Dabrowski:2019ab},
introducing non-associativity \cite{Boyle:2019ab},working in the
framework of \emph{geometric background} \cite{Besnard:2019aa,Besnard:2020ac}
(see \cite{Chamseddine:2019aa} for a recent review). The relation of
some of these procedures with the minimal twist presented here have
been investigated in \cite{Martinetti:2021aa}  regarding the removal of the first order condition \cite{Martinetti:2021aa},
and in \cite{Brzezinski:2018aa}  regarding the twisting of the real
structure  (see also  \cite{Goffeng:2019aa}).

The main result of this note is that the twist-by-grading is not the
only possibility for minimally twisting the Standard Model. It is true
that the minimal twist of an almost commutative geometry by a twisting
operator of the form $\T\otimes T_F$ is a twisted spectral triple if
and only if $\T=\gamma_\M$. However,  $T_F$ does not need to be a grading of
the finite dimensional space. If one requires the order zero
condition, then 
\begin{equation}
  \label{eq:30conc}
[T_F, J_F T_F J_F^{-1}]=0 \quad\text{ and }\quad
[T_F, J_F \,\pi_F(m) \, J_F^{-1}]=0 \quad \forall
 m\in \A_F,
\end{equation}
and if one requires the twisted first-order condition, then 
\begin{equation}
  \label{eq:30finconc}
\left\{\left\{D_F, T_F\right\}, J_F T_F J_F^{-1}\right\}=0 \quad\text{ and }\quad
[\left\{D_F,T_F\right\}, J_F \,\pi_F(m) \, J_F^{-1}]=0 \quad \forall
 m\in \A_F.
\end{equation}
 These conditions hold in more generality, for the minimal twist of an
 arbitrary (real) spectral triple, as shown in Propositions
 \ref{sec:order-zero-condition-1} and \ref{sec:twisted-first-order-2}.
 Classifying all the solutions of
these constraints for the spectral triple of the Standard Model will be
the object of a future work.

 \bibliographystyle{abbrv}
  \bibliography{/Users/pierre/physique/articles/bibdesk/biblio}
 \end{document}